\newtheorem{theorem}{Theorem}[section]
\newtheorem{proposition}[theorem]{Proposition}
\newtheorem{lemma}[theorem]{Lemma}
\theoremstyle{definition}
\newtheorem*{acknowledgments}{Acknowledgment}
\theoremstyle{remark}
\newtheorem{remark}{Remark}
\newcommand{\abs}[1]{\left\lvert #1 \right\rvert}
\newcommand{\norm}[1]{\left\lVert #1 \right\rVert}
\newcommand{\I}{{\rm i}}
\newcommand{\inner}[2]{\left\langle#1,#2\right\rangle}
\newcommand{\cB}{\mathcal{B}}
\newcommand{\R}{{\mathbb R}}
\newcommand{\C}{{\mathbb C}}
\newcommand{\N}{{\mathbb N}}
\newcommand{\cc}[1]{\overline{#1}}
\newcommand{\mb}{\boldsymbol}
\DeclareMathOperator{\dom}{dom}
\DeclareMathOperator{\Ker}{ker}
\DeclareMathOperator{\ran}{ran}
\DeclareMathOperator{\spec}{spec}
\DeclareMathOperator{\Span}{span}
\DeclareMathOperator{\assoc}{assoc}
\begin{document}
\begin{titlepage}
\title
{\vspace{-1cm}On dB
  spaces with nondensely defined \\ multiplication
  operator and the  existence \\ of zero-free functions
\footnotetext{%
Mathematics Subject Classification(2010):
46E22; 
Secondary
47A25, 
47B25, 
47N99. 
}
\footnotetext{%
Keywords: de Branges spaces, zero-free functions.
}
\\[2mm]}
\author{
\textbf{Luis O. Silva}\thanks{Partially supported by CONACYT 
	(M{\'e}xico) through grant CB-2008-01-99100}
\\
\small Departamento de F\'{i}sica Matem\'{a}tica\\[-1.6mm]
\small Instituto de Investigaciones en Matem\'{a}ticas Aplicadas y
	en Sistemas\\[-1.6mm]
\small Universidad Nacional Aut\'{o}noma de M\'{e}xico\\[-1.6mm]
\small C.P. 04510, M\'{e}xico D.F.\\[-1.6mm]
\small \texttt{silva@iimas.unam.mx}
\\[4mm]
\textbf{Julio H. Toloza}\thanks{Partially supported by CONICET 
	(Argentina) through grant PIP 112-201101-00245}
\\
\small CONICET, and\\[-1.6mm]
\small Centro de Investigaci\'{o}n en Inform\'{a}tica para la
	Ingenier\'{i}a\\[-1.6mm]
\small Universidad Tecnol\'{o}gica Nacional --
	 Facultad Regional C\'{o}rdoba\\[-1.6mm]
\small Maestro L\'{o}pez esq.\ Cruz Roja Argentina\\[-1.6mm]
\small X5016ZAA C\'{o}rdoba, Argentina\\[-1.6mm]
\small \texttt{jtoloza@scdt.frc.utn.edu.ar}}

\date{}
\maketitle

\begin{center}
\begin{minipage}{5in}
  \centerline{{\bf Abstract}} \bigskip
  In this work we consider de Branges spaces where the multiplication
  operator by the independent variable is not densely defined.  First,
  we study the canonical selfadjoint extensions of the multiplication
  operator as a family of rank-one perturbations from the viewpoint
  of the theory of de Branges spaces.  Then, on the basis of the
  obtained results, we provide new necessary and sufficient conditions
  for a real, zero-free function to lie in a de Branges space.
\end{minipage}
\end{center}
\bigskip
\thispagestyle{empty}
\end{titlepage}

\section{Introduction}

In this paper we deal with some properties of the class of de Branges
spaces (dB spaces) characterized by the fact that the  operator $S$ of 
multiplication by the independent variable is not densely defined. 
We recall that a de Branges space $\cB$ is a Hilbert space of entire 
functions which can be defined by means of an Hermite-Biehler 
function $e(z)$ (for details see Section 2). As it is well known, 
when the domain of $S$, denoted $\dom(S)$, is not dense its 
codimension equals one \cite[Theorem~29]{debranges}.
In particular, in such case $\dom(S)$ is orthogonal to one of the 
associated functions 
\begin{equation}\label{eq:assoc-functions}
s_\beta(z):=\frac{i}{2}\left[e^{i\beta}e(z)-e^{-i\beta}e^\#(z)\right]
	     = \sin\beta\, s_{\pi/2}(z) + \cos\beta\, s_0(z),
\end{equation}
where $\beta\in[0,\pi)$. 
As we recall in Section 3, this family of functions is in one-to-one 
correspondence with the set of canonical (that is, within
$\cB$) selfadjoint extensions $S_\beta$ of $S$.  Moreover, the
function $s_\beta(z)$ that is orthogonal to $\dom(S)$ is the only one 
of this family that belongs to $\cB$. Without loss of generality we
shall henceforth assume that this occurs for $s_0(z)$ (otherwise one
can always perform a change of parametrization).

We begin by looking at the operator $S$ and its canonical selfadjoint 
extensions. The main result here is 
Theorem~\ref{thm:rank-one-perturbations}, where we render the 
selfadjoint operator extensions of $S$ as a family of rank-one 
perturbations of $S_{\pi/2}$ along the function $s_0(z)\in\cB$, viz.,
\begin{equation}
\label{eq:rank-one-perturbation-intro}
S_\beta = S_{\pi/2}-\frac{\cot\beta}{\pi}
                    \inner{s_0(\cdot)}{\cdot\,}_\cB s_0(z),
          \quad \beta\in(0,\pi).          
\end{equation}
Generically speaking, a formula of this sort is known to be valid from
the abstract theory of rank-one perturbations of relations with
deficiency indices $(1,1)$; see for instance \cite{hassi}. However, we
derive \eqref{eq:rank-one-perturbation-intro} exclusively from the
properties of functions in dB spaces and the family $s_\beta(z)$,
$\beta\in[0,\pi)$. We believe that this derivation yields further
insight on the interplay between the functions $s_\beta(z)$ and the
corresponding selfadjoint extensions of $S$. In passing, we note that
the selfadjoint extension $S_0$ is not itself an operator but rather a
(multi-valued) linear relation; see
\eqref{eq:linear-relation-selfadjoint} below.

Equation \eqref{eq:rank-one-perturbation-intro} suggests studying
whether $s_0(z)$ is a generating vector of a selfadjoint extension of
$S$. For a definition of generating vector we refer the reader to
\cite[Section 69, Definition 1]{MR1255973}.
Theorem~\ref{thm:generating-element} asserts that $s_0(z)$ is a
generating vector for $S_{\pi/2}$, and therefore, for all of the
selfadjoint extensions of $S$ with the exception of $S_0$.

With these results at hand, we tackle the question of whether a dB
space of the class considered in this work has a zero-free function.
The existence of a real zero-free function in a dB space (or more
generally, in certain spaces of functions associated to it) has been
studied in great detail; see for instance
\cite{langer-woracek,II,woracek2}. From the point of view of Krein's
theory of entire operator \cite{krein3}, if $g(z)\in\cB$ is zero-free
then it is an entire gauge for $S$, viz., it satisfies
\begin{equation}
  \label{eq:entire-condition}
  \cB=\ran(S-wI)\dotplus\Span\{g(z)\}\,,\qquad\forall w\in\C.
\end{equation}

In Theorem~\ref{thm:zero-free-function} we show that a real zero-free
function of the form $s_{\beta}(z)/j_\beta(z)$ is in $\cB$, where
$j_\beta(z)$ is any real entire function whose zero-set coincides with
that of $s_\beta(z)$, if and only if
\begin{equation*}
  \frac{1}{j_\beta(z)}=\sum_{k=1}^\infty\frac{c_k}{z-x_k},
\end{equation*}
where $\{c_k\}_{k\in\N}$ satisfies
\begin{equation*}
  \sum_{k=1}^\infty\abs{c_k}^2\abs{\frac{s'_{\beta}(x_k)}{s_0(x_k)}}
  <\infty.
\end{equation*}

Theorem~\ref{thm:zero-free-function} does not hold for $\beta=0$.
This case is treated apart in
Theorem~\ref{thm:alternative-characterization}, where specific
necessary and sufficient conditions, for a real zero-free function of
the form $s_0(z)/j_0(z)$ to be in $\cB$, are given. This
characterization is based on the fact, elaborated in Remark 4, that
every zero-free function in $\cB$ is a generating vector for some,
hence every, selfadjoint operator extension of $S$.

According to \cite[Theorem 3.2]{woracek2}, if there exists a real
zero-free function in the dB space, then this function is unique up to
a multiplicative real constant. Therefore, in this case, all the
functions $s_{\beta}(z)/j_\beta(z)$, $\beta\in[0,\pi)$, are basically
the same one. Thus, Theorems~\ref{thm:zero-free-function} and
\ref{thm:alternative-characterization} give two different
characterizations of dB spaces with nondensely defined multiplication
operator and having zero-free functions. Note also that, since
(\ref{eq:entire-condition}) means that $S$ is an entire operator, each
of these theorems provides necessary and sufficient conditions for the
operator $S$ to be entire.

It is worth remarking that the characterizations of dB spaces with
zero-free functions given by Theorems~\ref{thm:zero-free-function} and
\ref{thm:alternative-characterization} differ from all
the characterizations already known, viz., from the one stated by de
Branges \cite[Theorem~25]{debranges} and those found, with diverse 
degree of generalization, in \cite{langer-woracek,II,woracek2}.

By the end of this note we briefly address the question of how 
the generating vector $s_0(z)$ and the entire gauge $g(z)$ are related. 
Proposition~\ref{prop:gauge-perturbation} is a
simple observation on a connection between these two functions
within the dB space.

\section{Preliminaries}

In what follows by a dB space we will always mean a de Branges
Hilbert space.

The usual definition of a dB space starts from an Hermite-Biehler 
function, that is, an entire function $e(z)$ satisfying 
$\abs{e(z)}>\abs{e(\cc{z})}$ for all $z\in\C^+$. Then, the dB space 
generated by $e(z)$ is defined as 
\begin{equation*}
  \cB(e):=\{f(z)\text{ entire}: f(z)/e(z), f^\#(z)/e(z)\in H_2(\C^+)\},
\end{equation*}
where $H_2(\C^+)$ is the Hardy space
\begin{equation*}
  H_2(\C^+):=\{f(z)\text{ is holomorphic in }\C^+: \sup_{y>0}
         \int_\R\abs{f(x+\I y)}^2dx<\infty\};
\end{equation*}
here $\C^+$ denotes the open upper half-plane. We also use the standard
notation $f^\#(z):=\cc{f(\cc{z})}$. The linear space
$\cB(e)$ equipped with the inner product
\begin{equation}
  \label{eq:dB-inner-product}
  \inner{g}{f}_\cB:=\int_\R\frac{\cc{g(x)}f(x)}{\abs{e(x)}^2}dx
\end{equation}
is a Hilbert space \cite[Theorem 2.2]{remling}.

There are alternative definitions of a dB space; see for instance
\cite[Proposition 2.1]{remling} and \cite[Chapter2]{debranges}. It is
also possible to define a de Branges space without relying on a given
Hermite-Biehler function \cite[Problem~50]{debranges}. Moreover, a
given dB space can be generated by different Hermite-Biehler functions
\cite[Theorem~1]{debranges0}.

By definition, a dB space has a reproducing kernel, that is, there
exists a function $k(z,w)$ that belongs to $\cB$ for all $w\in\C$ and
satisfies the property $\inner{k(\cdot,w)}{f(\cdot)}_\cB=f(w)$ for all
$f(z)\in\cB$. Moreover, $k(w,z)=\cc{k(z,w)}$ and
$\cc{k(\cc{z},w)}=k(z,\cc{w})$ \cite[Theorem~23]{debranges}.

One important operator in a dB space is the operator of
multiplication by the independent variable,
\[
\dom(S) = \{f(z)\in\cB: zf(z)\in\cB\},\quad (Sf)(z):=zf(z).
\]
This operator is symmetric, closed, regular, and has deficiency
indices $(1,1)$. Its domain has codimension 1 or 0, depending on
whether one (and in that case, only one) of the functions $s_\beta(z)$
is within $\cB$ or none is \cite[Theorem~29]{debranges}.

To any de Branges space there corresponds a so-called space of
associated functions \cite[Section~25]{debranges}. This space can be
succinctly defined by
\begin{equation*}
\assoc\cB := \cB + z\cB
\end{equation*}
(see \cite[Lemma~4.5]{kaltenback}). Within $\assoc\cB$ lies the
distinguished family of entire functions $s_\beta(z)$ defined by
\eqref{eq:assoc-functions}. Generically,
$s_\beta(z)\in\assoc\cB\setminus\cB$. As already mentioned, this
family of functions is in bijective correspondence with the set of
canonical selfadjoint extension of $S$ (see
\eqref{eq:definition-selfadj} and
\eqref{eq:linear-relation-selfadjoint} below). From its definition, it
follows that $s_\beta(z)$ is real (that is, it satisfies
$\cc{s_\beta(z)}=s_\beta(\cc{z})$), it can also be verified that this
function has only simple zeros and its
zero-set coincides with the spectrum of the corresponding selfadjoint
extension $S_\beta$.

The reproducing kernel can be written in terms of the functions
$s_\beta(z)$. In particular \cite[Section~2]{langer-woracek},
\begin{equation}
\label{eq:reproducing-kernel}
k(z,w) = \begin{dcases}
         \frac{s_{\pi/2}(z)s_0(\cc{w})-s_0(z)s_{\pi/2}(\cc{w})}
		 {\pi(z-\cc{w})},& z\ne \cc{w},
         \\
         \frac{1}{\pi}
		 \left[s'_{\pi/2}(z)s_0(z)-s_{\pi/2}(z)s'_0(z)\right],&
		 z=\cc{w}.
           \end{dcases}
\end{equation}

\section[Selfadjoint extensions of multiplication operator]
{Selfadjoint extensions of $\mb{S}$}

Since we are assuming that the multiplication operator $S$ in $\cB$ is
not densely defined, one of the functions $s_\beta(z)$ necessarily
belongs to $\cB$. As already mentioned, we may suppose that this
happens for $\beta=0$. Consequently, $\dom(S)^\perp=\Span\{s_0(z)\}$
\cite[Theorem 29]{debranges}. The selfadjoint operator extensions of
$S$, corresponding to $\beta\in(0,\pi)$, can be described as follows
\cite[Propositions~4.5 and 6.1]{kaltenback} 
(cf. \cite[Proposition~3.8]{II}),
\begin{subequations}
\label{eq:definition-selfadj}
\begin{gather}
\dom(S_\beta) =
	\left\{g(z)=\frac{s_\beta(w)f(z)-s_\beta(z)f(w)}{\sin\beta\,(z-w)},
	\, f(z)\in\cB,\, w\in\C\right\},
	\label{eq:dom-selfadj}
\\[2mm]
(S_\beta g)(z) = z g(z) + \frac{1}{\sin\beta}f(w)s_\beta(z),
	\label{eq:action-selfaj}
\end{gather}
\end{subequations}
while the remainder selfadjoint extension of $S$ is given by the linear
relation
\begin{equation}
\label{eq:linear-relation-selfadjoint}
S_0 = \left\{(g(z),zg(z) + cs_0(z)): g(z)\in\dom(S),\, c\in\C\right\};
\end{equation}
clearly $\dom(S_0)=\dom(S)$.

\begin{lemma}
\label{lem:rank-one}
Assume $s_0(z)\in\cB$. Then $\dom(S_\beta)=\dom(S_{\pi/2})$ for all
$\beta\in(0,\pi)$. Furthermore,
\begin{equation}\label{eq:almost-rank-one}
\left(S_\beta g\right)(z)
   = \left(S_{\pi/2}g\right)(z) + \frac{\cos\beta}{\sin\beta}f(w)s_0(z),
\end{equation}
for all $g(z)\in\dom(S_\beta)$ and where $f(z)$ is related to $g(z)$ by
\eqref{eq:dom-selfadj}.
\end{lemma}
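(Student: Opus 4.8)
The plan is to substitute the linear decomposition $s_\beta(z)=\sin\beta\,s_{\pi/2}(z)+\cos\beta\,s_0(z)$ from \eqref{eq:assoc-functions} into the domain formula \eqref{eq:dom-selfadj} and then extract the claimed structure. Fix $\beta\in(0,\pi)$ and take $g(z)\in\dom(S_\beta)$, represented by some $f(z)\in\cB$ and $w\in\C$ as in \eqref{eq:dom-selfadj}. Expanding the numerator and cancelling the factor $\sin\beta$ yields
\begin{equation*}
g(z)=g_{\pi/2}(z)+\cot\beta\,P(z),\qquad
P(z):=\frac{s_0(w)f(z)-s_0(z)f(w)}{z-w},
\end{equation*}
where $g_{\pi/2}(z)=[s_{\pi/2}(w)f(z)-s_{\pi/2}(z)f(w)]/(z-w)$ is exactly the element of $\dom(S_{\pi/2})$ attached to the same data $(f,w)$ by \eqref{eq:dom-selfadj} at $\beta=\pi/2$.

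The crux is to show that the correction term $P(z)$ lies in $\dom(S)$, and this is precisely where the hypothesis $s_0(z)\in\cB$ is used. I would first argue $P(z)\in\cB$: the case $\beta=\pi/2$ is trivial, while for $\beta\ne\pi/2$ one has $P(z)=\tan\beta\,[g(z)-g_{\pi/2}(z)]$, a difference of two members of $\cB$ since $g\in\dom(S_\beta)\subseteq\cB$ and $g_{\pi/2}\in\dom(S_{\pi/2})\subseteq\cB$. I would then show $zP(z)\in\cB$ by writing $z=(z-w)+w$ in the numerator of $zP(z)$, which gives the identity
\begin{equation*}
zP(z)=s_0(w)f(z)-f(w)s_0(z)+wP(z);
\end{equation*}
the right-hand side is a linear combination of $f(z)\in\cB$, $s_0(z)\in\cB$ and $P(z)\in\cB$, hence belongs to $\cB$. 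Thus $P(z)\in\dom(S)$.

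With $P(z)\in\dom(S)$ established, both assertions follow by bookkeeping. Since $\dom(S)\subseteq\dom(S_{\pi/2})\cap\dom(S_\beta)$ (both operators extend $S$), the displayed decomposition places every $g\in\dom(S_\beta)$ in $\dom(S_{\pi/2})$; rearranged as $g_{\pi/2}=g-\cot\beta\,P$ and applied to a generic element of $\dom(S_{\pi/2})$, it gives the reverse inclusion, so $\dom(S_\beta)=\dom(S_{\pi/2})$. For the action I would use linearity of $S_{\pi/2}$ on $g=g_{\pi/2}+\cot\beta\,P$, together with $(S_{\pi/2}P)(z)=(SP)(z)=zP(z)$ and the action formula \eqref{eq:action-selfaj} at $\beta=\pi/2$, namely $(S_{\pi/2}g_{\pi/2})(z)=zg_{\pi/2}(z)+f(w)s_{\pi/2}(z)$; this produces $(S_{\pi/2}g)(z)=zg(z)+f(w)s_{\pi/2}(z)$. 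Comparing with $(S_\beta g)(z)=zg(z)+\tfrac{1}{\sin\beta}f(w)s_\beta(z)$ and again substituting $s_\beta=\sin\beta\,s_{\pi/2}+\cos\beta\,s_0$ yields exactly \eqref{eq:almost-rank-one}. The only genuine obstacle is the verification that $P(z)\in\dom(S)$; the rest is algebra driven by \eqref{eq:assoc-functions}.
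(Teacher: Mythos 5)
Your proof is correct and follows essentially the same route as the paper: the same decomposition of $g$ into the $\dom(S_{\pi/2})$-element attached to $(f,w)$ plus $\cot\beta\,P$, the same key observation that $P(z)\in\dom(S)$ (which you verify by the direct algebraic identity for $zP(z)$ rather than by noting that the numerator lies in $\ran(S-wI)$), and the same symmetry argument for the equality of domains. Your computation of the action via linearity of $S_{\pi/2}$ is if anything slightly cleaner than the paper's, which instead re-represents $g$ as an element of $\dom(S_{\pi/2})$ through an auxiliary function $n(z)$ with $n(w)=0$.
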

%
%
%
\begin{proof}
Consider $g(z)\in\dom(S_\beta)$. By \eqref{eq:dom-selfadj},
\[
g(z) = \frac{s_\beta(w)f(z)-s_\beta(z)f(w)}{\sin\beta\,(z-w)}
\]
for some $f(z)\in\cB$. Using \eqref{eq:assoc-functions} this can be
written as
\begin{equation}
\label{eq:decomposition-of-g}
g(z) = \frac{s_{\pi/2}(w)f(z)-s_{\pi/2}(z)f(w)}{z-w}
	+\frac{\cos\beta}{\sin\beta}\,\frac{s_0(w)f(z)-s_0(z)f(w)}{z-w}.
\end{equation}
The first term above belongs to $\dom(S_{\pi/2})$ due to
\eqref{eq:dom-selfadj}, hence the second one belongs to $\cB$.
Moreover, since $s_0(z)\in\cB$, the numerator of the second
term is in $\cB$ and, therefore, it is in $\ran(S-wI)$. Thus, the 
second term in \eqref{eq:decomposition-of-g} lies in $\dom(S)$.
Taking into account that
$\dom(S)\subset\dom(S_{\pi/2})$, one concludes that there exists
$n(z)\in\cB$ such that
\begin{equation}
\label{eq:mess}
\frac{\cos\beta}{\sin\beta}\,\frac{s_0(w)f(z)-s_0(z)f(w)}{z-w}
	= \frac{s_{\pi/2}(w)n(z)-s_{\pi/2}(z)n(w)}{z-w},
\end{equation}
The fact that both terms in (\ref{eq:decomposition-of-g}) belong to
$\dom(S_{\pi/2})$ shows that
$\dom(S_\beta)\subseteq\dom(S_{\pi/2})$. Since in the argument above
we can switch the roles of $S_\beta$ and $S_{\pi/2}$, we in fact have
$\dom(S_\beta)=\dom(S_{\pi/2})$.

Since the numerator of the l.h.s. of \eqref{eq:mess} lies in $\cB$, 
it also does the numerator of the r.h.s. As $s_{\pi/2}(z)\not\in\cB$,
necessarily $n(w)=0$, that is, $n(z)\in\ran(S-wI)$. 
Let $h(z):=f(z)+n(z)$. Then, resorting to \eqref{eq:assoc-functions} 
once again, we obtain
\begin{align*}
\left(S_\beta g\right)(z)
	&= zg(z) + \frac{1}{\sin\beta}f(w)s_\beta(z)
	\\
	&= zg(z) + f(w)s_{\pi/2}(z) + \frac{\cos\beta}{\sin\beta}f(w)s_0(z)
	\\
	&= zg(z) + h(w)s_{\pi/2}(z) - n(w)s_{\pi/2}(z)
		   + \frac{\cos\beta}{\sin\beta}f(w)s_0(z),
\end{align*}
which yields \eqref{eq:almost-rank-one}.
\end{proof}

The following assertion does not depend on assuming that a 
function $s_\beta(z)$ is in $\cB$ (that is, it holds on any dB space).

\begin{lemma}\label{lemma:things-are-orthogonal-to-s}
 For every $s_\beta(z)$, $\beta\in[0,\pi)$, and
  $h(z)\in\dom(S)$,
\[
\int_{-\infty}^\infty \frac{\cc{s_\beta(x)}h(x)}{\abs{e(x)}^2}dx = 0.
\]
\end{lemma}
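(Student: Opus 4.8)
The plan is to reduce the statement to two integral identities over $\R$ and then invoke elementary Hardy-space theory. Since for real $x$ the function $s_\beta$ is real-valued (it satisfies $\cc{s_\beta(x)}=s_\beta(x)$) and, by \eqref{eq:assoc-functions}, $s_\beta=\tfrac{\I}{2}\left(e^{\I\beta}e-e^{-\I\beta}e^\#\right)$, I would first write, using $\abs{e(x)}^2=e(x)e^\#(x)$ on the real line,
\[
\int_{-\infty}^\infty\frac{\cc{s_\beta(x)}h(x)}{\abs{e(x)}^2}\,dx
=\frac{\I}{2}\left[e^{\I\beta}\int_{-\infty}^\infty\frac{h(x)}{e^\#(x)}\,dx
-e^{-\I\beta}\int_{-\infty}^\infty\frac{h(x)}{e(x)}\,dx\right].
\]
Thus it suffices to show that each of the two integrals on the right vanishes for every $h(z)\in\dom(S)$.

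Next I would prove that $h/e$ lies in the Hardy space $H_1(\C^+)$. By the definition of $\cB=\cB(e)$ we already have $h/e\in H_2(\C^+)$, and since $h(z)\in\dom(S)$ the function $zh(z)$ belongs to $\cB$, whence $zh/e\in H_2(\C^+)$ as well. Writing $h/e=z^{-1}\left(zh/e\right)$ on the tail $\{\abs{x}\ge1\}$ and applying the Cauchy--Schwarz inequality (against $z^{-2}\in L^1$ there), together with a direct Cauchy--Schwarz estimate on $\{\abs{x}\le1\}$ using $h/e\in L^2(\R)$, shows that the boundary function $h/e$ is in $L^1(\R)$. An $H_2(\C^+)$ function whose boundary trace is integrable equals the Poisson integral of that trace and hence lies in $H_1(\C^+)$.

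With $h/e\in H_1(\C^+)$ in hand, I would invoke the standard fact that a function in $H_1(\C^+)$ has vanishing integral over the real axis (its Fourier transform is continuous and supported on $[0,\infty)$, so its value at the origin, namely $\int_\R h/e$, is zero; equivalently one closes a large semicircle in $\C^+$). This yields $\int_\R h/e\,dx=0$. For the companion integral I would note that $\cB$ and $\dom(S)$ are invariant under $f\mapsto f^\#$ (the defining conditions are symmetric in $f$ and $f^\#$, and $(zh)^\#=zh^\#$), so $h^\#(z)\in\dom(S)$ and the previous step gives $\int_\R h^\#/e\,dx=0$; taking complex conjugates and using $\cc{h^\#(x)}=h(x)$ and $\cc{e(x)}=e^\#(x)$ for real $x$ turns this into $\int_\R h/e^\#\,dx=0$. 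Substituting both identities into the displayed formula completes the proof.

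I expect the main obstacle to be the middle step, namely certifying membership in $H_1(\C^+)$ rather than merely in $H_2(\C^+)$. The point is that square-integrability of the boundary trace is not enough to force the integral to vanish, and it is precisely the extra hypothesis $zh(z)\in\cB$ (that is, $h(z)\in\dom(S)$ and not just $h(z)\in\cB$) that supplies the decay needed to pass from $H_2$ to $H_1$. Some care is also warranted in invoking the vanishing-integral property, which I would state cleanly as a separate fact about $H_1(\C^+)$ in order to keep the argument self-contained.
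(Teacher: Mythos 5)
Your proof is correct, but it takes a genuinely different route from the paper's. The paper argues operator-theoretically: it picks a real zero $x_0$ of $s_\beta(z)$, observes that $k(z)=s_\beta(z)/(z-x_0)$ is an eigenfunction of $S_\beta$ and hence lies in $\Ker(S^*-x_0I)$, rewrites the integral as $\inner{k(\cdot)}{(S-x_0I)h(\cdot)}_\cB$, and concludes by the orthogonality of $\ran(S-x_0I)$ and $\Ker(S^*-x_0I)$ (with a remark on the multivalued part of $S^*$). You instead work at the level of Hardy spaces: splitting $\cc{s_\beta(x)}/\abs{e(x)}^2$ into the two terms $\tfrac{\I}{2}e^{\I\beta}/e^\#(x)$ and $-\tfrac{\I}{2}e^{-\I\beta}/e(x)$, showing $h/e\in H_1(\C^+)$ from the pair of conditions $h/e,\ zh/e\in H_2(\C^+)$, invoking the vanishing mean of $H_1(\C^+)$ boundary traces, and handling the companion term via the $\#$-invariance of $\dom(S)$. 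All the steps check out: the Cauchy--Schwarz argument does give $L^1$ boundary values, an $H_2$ function with $L^1\cap L^2$ boundary trace is the Poisson integral of that trace and so lies in $H_1$, and the continuity of the Fourier transform together with its one-sided support forces $\int_\R h/e\,dx=0$. The trade-off: the paper's proof is shorter and recycles the spectral machinery already set up (but needs $s_\beta$ to have a real zero and some care with $S^*$ as a relation), whereas yours is self-contained at the level of classical function theory and makes completely explicit where the hypothesis $h\in\dom(S)$ enters, namely in supplying the $O(1/\abs{x})$ decay that upgrades $H_2$ membership to $H_1$. One cosmetic caveat: if $e(z)$ is allowed real zeros, your pointwise splitting of $1/(e(x)e^\#(x))$ needs the standard remark that every $h\in\cB$ vanishes there, but this affects the paper's integral formula equally.
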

\begin{proof}
Let $x_0$ be a zero of $s_\beta(z)$. Then, by 
  (\ref{eq:definition-selfadj}), $k(z):=s_\beta(z)/(z-x_0)$ is an
eigenfunction of $S_\beta$ with eigenvalue $x_0$. Therefore,
\[
\int_{-\infty}^\infty \frac{\cc{s_\beta(x)}h(x)}{\abs{e(x)}^2}dx
	= \int_{-\infty}^\infty \frac{\cc{k(x)}(x-x_0)h(x)}{\abs{e(x)}^2}dx
	= \inner{k(\cdot)}{(S-x_0I)h(\cdot)}_{\cB}
	= 0
\]
where the last identity follows after realizing that
$k(z)\in\Ker(S^*-x_0I)$ and the multivalued part of $S^*$ equals
$\Span\{s_0(z)\}$ \cite{hassi}.
\end{proof}

\begin{lemma}
\label{lem:perturbation-rank-one}
Let $s_0(z)\in\cB$. For $g(z)\in\dom(S_\beta)$, $f(z)\in\cB$ and
$w\in\C$ related to each other by \eqref{eq:dom-selfadj}, one has
$\inner{s_0(\cdot)}{g(\cdot)}_\cB=-\pi f(w)$.
\end{lemma}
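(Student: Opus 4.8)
The plan is to reduce the computation to the two natural summands appearing in the decomposition \eqref{eq:decomposition-of-g} of $g(z)$, and then to evaluate the surviving one by means of the reproducing kernel \eqref{eq:reproducing-kernel} together with Lemma~\ref{lemma:things-are-orthogonal-to-s}. Setting
\[
g_1(z):=\frac{s_{\pi/2}(w)f(z)-s_{\pi/2}(z)f(w)}{z-w},
\qquad
g_2(z):=\frac{s_0(w)f(z)-s_0(z)f(w)}{z-w},
\]
the decomposition \eqref{eq:decomposition-of-g} reads $g=g_1+(\cos\beta/\sin\beta)g_2$, and the argument in the proof of Lemma~\ref{lem:rank-one} shows that $g_1\in\dom(S_{\pi/2})$ and $g_2\in\dom(S)$. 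Since $g_2\in\dom(S)$, Lemma~\ref{lemma:things-are-orthogonal-to-s} with $\beta=0$ gives $\inner{s_0(\cdot)}{g_2(\cdot)}_\cB=0$, so that $\inner{s_0(\cdot)}{g(\cdot)}_\cB=\inner{s_0(\cdot)}{g_1(\cdot)}_\cB$ and it remains to prove $\inner{s_0(\cdot)}{g_1(\cdot)}_\cB=-\pi f(w)$.

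For this last identity I would express $f(w)$ through the reproducing kernel. Using $f(w)=\inner{k(\cdot,w)}{f(\cdot)}_\cB$, the symmetry $\cc{k(x,w)}=k(w,x)$, the fact that $s_0(z)$ and $s_{\pi/2}(z)$ are real, and the explicit formula \eqref{eq:reproducing-kernel}, one obtains
\[
\pi f(w)=\int_\R\frac{\bigl[s_0(w)s_{\pi/2}(x)-s_{\pi/2}(w)s_0(x)\bigr]f(x)}{(x-w)\abs{e(x)}^2}\,dx .
\]
Both this integral and $\inner{s_0(\cdot)}{g_1(\cdot)}_\cB=\int_\R s_0(x)g_1(x)\abs{e(x)}^{-2}\,dx$ are genuine (absolutely convergent) $\cB$-integrals, so they may be added under a single integral sign. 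Upon adding them the terms proportional to $s_{\pi/2}(w)s_0(x)f(x)$ cancel, and the remaining numerator collapses to $s_{\pi/2}(x)\bigl[s_0(w)f(x)-s_0(x)f(w)\bigr]=(x-w)s_{\pi/2}(x)g_2(x)$; the factor $(x-w)$ cancels the denominator, leaving
\[
\inner{s_0(\cdot)}{g_1(\cdot)}_\cB+\pi f(w)
=\int_\R\frac{s_{\pi/2}(x)g_2(x)}{\abs{e(x)}^2}\,dx
=\inner{s_{\pi/2}(\cdot)}{g_2(\cdot)}_\cB .
\]
Because $g_2\in\dom(S)$, Lemma~\ref{lemma:things-are-orthogonal-to-s} with $\beta=\pi/2$ shows the right-hand side vanishes, whence $\inner{s_0(\cdot)}{g_1(\cdot)}_\cB=-\pi f(w)$ and the claim follows.

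The only delicate point is the legitimacy of recombining the two integrals and then cancelling the factor $(x-w)$. This is safe precisely because each of the two pieces is an honest inner product of functions in $\cB$ (hence Lebesgue integrable), and because the combined numerator $s_{\pi/2}(x)\bigl[s_0(w)f(x)-s_0(x)f(w)\bigr]$ vanishes at $x=w$, so that after cancellation the integrand is exactly the one appearing in Lemma~\ref{lemma:things-are-orthogonal-to-s}, with no spurious pole introduced. I expect this bookkeeping, rather than any conceptual difficulty, to be the main thing to get right; everything else is a direct substitution of \eqref{eq:reproducing-kernel} and \eqref{eq:dom-selfadj} followed by two applications of Lemma~\ref{lemma:things-are-orthogonal-to-s}, at $\beta=0$ and at $\beta=\pi/2$.
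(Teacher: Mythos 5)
Your argument is correct and is essentially the paper's own proof: both reduce $\inner{s_0(\cdot)}{g(\cdot)}_\cB$ to $\inner{s_0(\cdot)}{g_1(\cdot)}_\cB$ via the orthogonality of $s_0(z)$ to $\dom(S)$, then split off the term $-\pi\inner{k(\cdot,w)}{f(\cdot)}_\cB=-\pi f(w)$ and kill the remainder $\inner{s_{\pi/2}(\cdot)}{g_2(\cdot)}_\cB$ with Lemma~\ref{lemma:things-are-orthogonal-to-s}. The only difference is presentational (you add $\pi f(w)$ to both sides and collapse the combined integrand, whereas the paper rearranges the integrand of $\inner{s_0(\cdot)}{g_1(\cdot)}_\cB$ directly), and your convergence bookkeeping is sound since each piece is an honest inner product in $\cB$.
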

\begin{proof}
Let us start by considering \eqref{eq:decomposition-of-g}. As already
mentioned, the second term of this identity lies in $\dom(S)$, so
\begin{align*}
\inner{s_0(\cdot)}{g(\cdot)}_\cB
	=& \int_{-\infty}^\infty
		\frac{\cc{s_0(x)}s_{\pi/2}(w)f(x)-\cc{s_0(x)}s_{\pi/2}(x)f(w)}
			{\abs{e(x)}^2(x-w)}\,dx
	\\[1mm]
	=& - \pi\inner{k(\cdot,w)}{f(\cdot)}_\cB
	\\[1mm]
	 &  + \int_{-\infty}^\infty
			\frac{\cc{s_{\pi/2}(x)}}{\abs{e(x)}^2}
		\left[\frac{s_0(w)f(x)-s_0(x)f(w)}{x-w}\right]dx,
\end{align*}
where the fact that the functions $s_\beta(z)$ are real has been used. 
In the second term, the expression between squared brackets lies in 
$\dom(S)$ so by Lemma~\ref{lemma:things-are-orthogonal-to-s} this term 
equals zero. Obviously the first term equals $-\pi f(w)$ so the 
assertion is proven.
\end{proof}

\begin{theorem}
\label{thm:rank-one-perturbations}
Assume $s_0(z)\in\cB$. Then the set of canonical selfadjoint operator
extensions of $S$ are given by
$\dom(S_\beta)=\dom(S_{\pi/2})$,
\begin{equation}
\label{eq:rank-one-perturbation}
S_\beta = S_{\pi/2}-\frac{\cot\beta}{\pi}
                    \inner{s_0(\cdot)}{\cdot\,}_\cB s_0(z),
\end{equation}
for $\beta\in(0,\pi)$.
\end{theorem}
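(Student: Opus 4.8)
The plan is to obtain \eqref{eq:rank-one-perturbation} by combining the two preceding lemmas, since between them they already contain all the necessary ingredients. Lemma~\ref{lem:rank-one} supplies the domain equality $\dom(S_\beta)=\dom(S_{\pi/2})$ for every $\beta\in(0,\pi)$, so the first assertion requires nothing further. For the operator identity, the same lemma gives the action formula \eqref{eq:almost-rank-one}, namely $(S_\beta g)(z)=(S_{\pi/2}g)(z)+\frac{\cos\beta}{\sin\beta}\,f(w)\,s_0(z)$ for each $g(z)\in\dom(S_\beta)$, where $f(z)\in\cB$ and $w\in\C$ are related to $g(z)$ through the parametrization \eqref{eq:dom-selfadj}.

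The key step is to recast the correction term $\frac{\cos\beta}{\sin\beta}\,f(w)\,s_0(z)$ as a rank-one operator acting on $g$. The difficulty with \eqref{eq:almost-rank-one} as it stands is that the coefficient $f(w)$ is written in terms of the auxiliary pair $(f,w)$, which is not uniquely determined by $g$; hence it is not immediately clear that $f(w)$ defines a functional of $g$ alone. This is precisely the point settled by Lemma~\ref{lem:perturbation-rank-one}, which identifies $f(w)=-\pi^{-1}\inner{s_0(\cdot)}{g(\cdot)}_\cB$. Substituting this into \eqref{eq:almost-rank-one} and using $\cos\beta/\sin\beta=\cot\beta$ gives
\begin{equation*}
(S_\beta g)(z)=(S_{\pi/2}g)(z)-\frac{\cot\beta}{\pi}\inner{s_0(\cdot)}{g(\cdot)}_\cB\, s_0(z),
\end{equation*}
which is exactly \eqref{eq:rank-one-perturbation} read as an identity between the values of both operators at an arbitrary $g(z)\in\dom(S_\beta)=\dom(S_{\pi/2})$.

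I expect no real obstacle beyond this bookkeeping: the genuinely substantive work has already been carried out in Lemma~\ref{lem:perturbation-rank-one}, whose role is to guarantee that the perturbation coefficient is intrinsic to $g$ rather than an artifact of the chosen representation \eqref{eq:dom-selfadj}. Once that is in hand, the theorem follows by a single substitution. The only point worth stating explicitly is that the equality of the two sides of \eqref{eq:rank-one-perturbation} on the common domain, together with that common domain being all of $\dom(S_{\pi/2})$, is what upgrades the pointwise computation to the claimed operator identity.
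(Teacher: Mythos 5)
Your proposal is correct and follows exactly the route of the paper, which proves Theorem~\ref{thm:rank-one-perturbations} by combining Lemma~\ref{lem:rank-one} (domain equality and the action formula \eqref{eq:almost-rank-one}) with Lemma~\ref{lem:perturbation-rank-one} (the identification $f(w)=-\pi^{-1}\inner{s_0(\cdot)}{g(\cdot)}_\cB$). The paper merely states that the assertion follows from these two lemmas; your write-up spells out the same substitution, including the worthwhile observation that Lemma~\ref{lem:perturbation-rank-one} is what makes the coefficient intrinsic to $g$.
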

\begin{proof}
The assertion follows straightforwardly from Lemmas \ref{lem:rank-one} 
and \ref{lem:perturbation-rank-one}.
\end{proof}

The previous discussion generalizes effortless if one assumes that
$s_\gamma(z)\in\cB$, $\gamma\in[0,\pi)$. For
$\beta\in[\gamma,\gamma+\pi)$, it is easy to see that
\[
s_\beta(z) = \sin(\beta-\gamma)s_{\gamma+\pi/2}(z)
			+ \cos(\beta-\gamma) s_{\gamma}(z).
\]
so instead of \eqref{eq:rank-one-perturbation} one has
\[
S_\beta = S_{\gamma+\pi/2}
		- \frac{\cot(\beta-\gamma)}{\pi}ç
			\inner{s_\gamma(\cdot)}{\cdot\,}_\cB s_\gamma(z),
\]
now for $\beta\in(\gamma,\gamma+\pi)$.

Now we turn to the proof of the fact that $s_0(z)$ is generating
element (see \cite[Section~69]{MR1255973}) of $S_{\pi/2}$.

\begin{theorem}
  \label{thm:generating-element}
  Assume that $s_0(z)\in\cB$. Then, for every
  $\beta\in(0,\pi)$,  $s_0(z)$ is a generating vector
  for the operator $S_{\beta}$.
\end{theorem}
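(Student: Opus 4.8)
The plan is to reduce the statement to a spectral condition and then verify it via the reproducing kernel. First I would recall from \cite[Section~69, Definition~1]{MR1255973} that $s_0(z)$ is a generating vector for the selfadjoint operator $S_\beta$ exactly when $s_0(z)$ is not orthogonal to any nontrivial subspace of $\cB$ that reduces $S_\beta$; for an operator with simple discrete spectrum this amounts to $s_0(z)$ having a nonzero component in each one-dimensional spectral subspace. Since, as recalled in Section~2, the spectrum of $S_\beta$ equals the zero-set $\{x_k\}$ of $s_\beta(z)$, which is a discrete set of simple real points, $S_\beta$ has purely discrete, simple spectrum, so $\cB=\bigoplus_k\Ker(S_\beta-x_kI)$ with the eigenspaces complete. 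By (the proof of) Lemma~\ref{lemma:things-are-orthogonal-to-s}, each $\Ker(S_\beta-x_kI)$ is spanned by $s_\beta(z)/(z-x_k)$. Thus the task reduces to showing that the projection of $s_0(z)$ onto $\Ker(S_\beta-x_kI)$ is nonzero for every zero $x_k$ of $s_\beta(z)$.

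The key step is to identify each eigenfunction with the reproducing kernel at the corresponding eigenvalue. Evaluating \eqref{eq:reproducing-kernel} at $w=x_k$ (so that $\cc{w}=x_k$) and using that $s_\beta(x_k)=0$ forces $s_{\pi/2}(x_k)=-\cot\beta\,s_0(x_k)$ via \eqref{eq:assoc-functions}, I would simplify the numerator to obtain
\[
k(z,x_k)=\frac{s_0(x_k)}{\pi\sin\beta}\cdot\frac{s_\beta(z)}{z-x_k}.
\]
Hence $k(\cdot,x_k)$ is a scalar multiple of the eigenfunction $s_\beta(z)/(z-x_k)$, and this multiple is nonzero as soon as $s_0(x_k)\ne0$. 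The projection of $s_0(z)$ onto the $x_k$-eigenspace is then, up to a nonzero factor, equal to $\inner{k(\cdot,x_k)}{s_0(\cdot)}_\cB=s_0(x_k)$ by the reproducing property, so it remains only to check $s_0(x_k)\ne0$.

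This non-vanishing is the one point requiring care, and I expect it to be the main (albeit minor) obstacle. If $s_0(x_k)=0$, then $s_\beta(x_k)=0$ together with $\sin\beta\ne0$ would force $s_{\pi/2}(x_k)=0$ as well; but then the diagonal value $k(x_k,x_k)=\frac{1}{\pi}\bigl[s'_{\pi/2}(x_k)s_0(x_k)-s_{\pi/2}(x_k)s'_0(x_k)\bigr]$ from \eqref{eq:reproducing-kernel} would vanish, contradicting $k(x_k,x_k)=\norm{k(\cdot,x_k)}_\cB^2>0$. (Equivalently, from \eqref{eq:assoc-functions} a common real zero of $s_0$ and $s_{\pi/2}$ would be a common zero of $e(z)$ and $e^\#(z)$, impossible since $e(z)$ has no real zeros.) Therefore $s_0(x_k)\ne0$ for every $k$, so $s_0(z)$ meets every eigenspace nontrivially and is a generating vector for $S_\beta$. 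I regard the kernel–eigenfunction identity as routine; the substantive points are the simple-spectrum/completeness reduction and this non-vanishing argument.
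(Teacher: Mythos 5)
Your proof is correct, but it follows a different route from the one the paper actually writes out for Theorem~\ref{thm:generating-element}; in fact it is essentially a fleshed-out version of the alternative argument the authors only sketch in Remark~\ref{rem:s-generating}. The paper's proof works with the resolvent: it computes $(S_{\pi/2}-wI)^{-1}s_0(z)=-\pi\,k(z,\cc{w})/s_{\pi/2}(w)$ for $w\in\spec(S_0)$, invokes \eqref{eq:k-in-kernel} to conclude that these vectors span $\cB$, and then transfers the generating property from $S_{\pi/2}$ to every $S_\beta$, $\beta\in(0,\pi)$, through the rank-one perturbation formula \eqref{eq:rank-one-perturbation}. You instead treat each $S_\beta$ directly: using that its spectrum is the simple, discrete zero set of $s_\beta(z)$, you reduce the claim to the nonvanishing of the projections of $s_0(z)$ onto the eigenspaces, identify $k(z,x_k)$ with $\tfrac{s_0(x_k)}{\pi\sin\beta}\,\tfrac{s_\beta(z)}{z-x_k}$ via \eqref{eq:reproducing-kernel} and \eqref{eq:assoc-functions}, and rule out $s_0(x_k)=0$ by observing that it would force $k(x_k,x_k)=0$, contradicting the positivity of $\norm{k(\cdot,x_k)}_\cB^2$ (which holds because the eigenfunction $s_\beta(z)/(z-x_k)$ takes the nonzero value $s'_\beta(x_k)$ at $x_k$). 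Your approach buys independence from Theorem~\ref{thm:rank-one-perturbations} and makes explicit the spectral mechanism (the zeros of $s_0$ never meet $\spec(S_\beta)$ for $\beta\ne 0$), at the cost of relying on the completeness of the eigenfunctions of $S_\beta$ — a fact the paper also uses freely in Section~4, so this is not a gap. The paper's resolvent computation, by contrast, yields the generating property for $S_{\pi/2}$ with one formula and gets the other extensions for free from the perturbation structure.
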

\begin{proof}
  Since $s_0(z)\in\cB$,
one has, on the basis of (\ref{eq:dom-selfadj})
  and (\ref{eq:action-selfaj}), that
\begin{align*}
(S_{\pi/2}-wI)^{-1}s_0(z)
	&= \frac{1}{s_{\pi/2}(w)}
        \frac{s_{\pi/2}(w)s_0(z)-s_{\pi/2}(z)s_0(w)}{z-w}
    \\[1mm]
    &= - \frac{\pi}{s_{\pi/2}(w)} k(z,\cc{w})
\end{align*}
for all $w\not\in\spec(S_{\pi/2})$. Hence, taking into account that
(see \cite[Section~3]{II})
\begin{equation}
  \label{eq:k-in-kernel}
  k(z,\cc{w})\in\ker(S^*-wI)\quad\text{for any } w\in\C\,,
\end{equation}
one
verifies
\begin{equation*}
  \cc{\Span_{w\in\spec(S_0)}\{(S_{\pi/2}-wI)^{-1}s_0(z)\}}=\cB\,.
\end{equation*}
Thus, $s_0(z)$ is a generating element for $S_{\pi/2}$, but then, it
can be derived from (\ref{eq:rank-one-perturbation}), that $s_0(z)$ is
a generating vector for $S_\beta$ with $\beta\in(0,\pi)$.
\end{proof}
\begin{remark}
  \label{rem:s-generating}
  Alternatively, $s_0(z)$ is a generating vector
  for $S_\beta$, $\beta\in(0,\pi)$, because it has a nonzero
  projection to each eigenspace of $S_\beta$. Indeed, this follows
  from (\ref{eq:k-in-kernel}) and the fact that the eigenvalues of
  $S_\beta$ with $\beta\in(0,\pi)$ never intersect the zeros of
  $s_0(z)$.  In passing, it is also clear that $s_0(z)$ is not a
  generating vector for $S_0$.
\end{remark}

\section{On the existence of a zero-free function}

Let $E_\beta(t)$ be the resolution of the identity of $S_\beta$,
$\beta\in(0,\pi)$. Define the family of spectral functions
\begin{equation*}
      m_\beta(t):=\inner{s_0(\cdot)}{E_\beta(t)s_0(\cdot)}_\cB=
\sum_{x_n<t}\frac{\abs{s_0(x_n)}^2}{\norm{k(\cdot,x_n)}_\cB^2}\,,\quad
\{x_n\}_{n\in\N}=\spec(S_\beta)\,.
\end{equation*}
Since $s_0(z)$ is a generating element of $S_\beta$,
$\beta\in(0,\pi)$, one can consider the family of canonical isometries
$U_\beta:L_2(\R,m_\beta)\stackrel{\text{onto}}{\to}\cB$ 
(cf. \cite[Section~69, Theorem~2]{MR1255973}) given by
\begin{equation}
  \label{eq:canonical-isometry}
  (U_\beta f)(z):=f(S)s_0(z)=  \sum_{x_n\in\spec(S_\beta)}
\frac{f(x_n)\inner{k(\cdot,x_n)}{s_0(\cdot)}_\cB}
{\norm{k(\cdot,x_n)}_\cB^2}k(z,x_n)\,.
\end{equation}

\begin{theorem}
\label{thm:zero-free-function}
Assume $s_0(z)\in\cB$ and fix $\beta\in(0,\pi)$. Let $j_\beta(z)$ be
any real entire function with simple zeros exactly at
$\{x_n\}_{n\in\N}=\spec(S_{\beta})$.  The zero-free function
$s_{\beta}(z)/j_\beta(z)$ is in $\cB$ if and only if the reciprocal of
the function $j_\beta(z)$ can be decomposed as follows,
\begin{equation}
  \label{eq:decomposition}
  \frac{1}{j_\beta(z)}=\sum_{k=1}^\infty\frac{c_k}{z-x_k},
\end{equation}
where $\{c_k\}_{k\in\N}$ satisfies
\begin{equation}
\label{eq:c-numbers}
  \sum_{k=1}^\infty\abs{c_k}^2\abs{\frac{s'_{\beta}(x_k)}{s_0(x_k)}}
  <\infty
\end{equation}
and the convergence in (\ref{eq:decomposition}) is uniform on compact
subsets of $\C\setminus\spec(S_{\beta})$.
\end{theorem}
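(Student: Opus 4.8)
The plan is to diagonalize $S_\beta$ explicitly and then read off membership of $g(z):=s_\beta(z)/j_\beta(z)$ in $\cB$ from the resulting coefficient sequence. Since $S_\beta$, $\beta\in(0,\pi)$, is selfadjoint with simple, purely discrete spectrum $\spec(S_\beta)=\{x_n\}_{n\in\N}$ and $s_0(z)$ is a generating vector for it (Theorem~\ref{thm:generating-element}), the normalized eigenfunctions $k(\cdot,x_n)/\norm{k(\cdot,x_n)}_\cB$ form an orthonormal basis of $\cB$, each $k(\cdot,x_n)$ spanning $\ker(S_\beta-x_nI)$. First I would rewrite the reproducing kernel \eqref{eq:reproducing-kernel} in terms of the pair $(s_\beta,s_{\beta+\pi/2})$ --- a reparametrization that leaves $k(z,w)$ invariant, as a direct substitution via \eqref{eq:assoc-functions} confirms. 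Evaluating at $w=x_n$ (so $s_\beta(x_n)=0$) and using the elementary identity $s_{\beta+\pi/2}(x_n)=-s_0(x_n)/\sin\beta$, valid at any zero of $s_\beta$, yields
\[
k(z,x_n)=\frac{s_0(x_n)}{\pi\sin\beta}\,\frac{s_\beta(z)}{z-x_n},\qquad
\norm{k(\cdot,x_n)}_\cB^2=\frac{s_0(x_n)s'_\beta(x_n)}{\pi\sin\beta}.
\]
Since the norm on the right is positive, $s_0(x_n)s'_\beta(x_n)>0$, a fact I will use to remove absolute values below.

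For necessity, assume $g(z)\in\cB$. Expanding in the orthonormal basis gives $g(z)=\sum_n\norm{k(\cdot,x_n)}_\cB^{-2}g(x_n)\,k(z,x_n)$ with $\norm{g}_\cB^2=\sum_n\abs{g(x_n)}^2\norm{k(\cdot,x_n)}_\cB^{-2}<\infty$, the series converging in $\cB$ and hence, by the reproducing-kernel bound $\abs{f(w)}\le\sqrt{k(w,w)}\,\norm{f}_\cB$, uniformly on compact subsets of $\C$. Since $s_\beta$ and $j_\beta$ share the same simple zeros, l'H\^opital's rule gives $g(x_n)=s'_\beta(x_n)/j'_\beta(x_n)$; setting $c_n:=1/j'_\beta(x_n)$ (the residue of $1/j_\beta$ at $x_n$) we obtain $g(x_n)=c_n s'_\beta(x_n)$. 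Substituting the two formulas above and dividing the locally uniform expansion by $s_\beta(z)$ produces \eqref{eq:decomposition} with the stated uniform convergence on compacts of $\C\setminus\spec(S_\beta)$; the same substitution turns $\norm{g}_\cB^2$ into $\pi\sin\beta\sum_n\abs{c_n}^2\abs{s'_\beta(x_n)/s_0(x_n)}$, which is exactly the finiteness condition \eqref{eq:c-numbers}.

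For sufficiency, assume \eqref{eq:decomposition} holds with $\{c_n\}$ obeying \eqref{eq:c-numbers}. Using the first formula above I would form the series $\sum_n c_n\,s_\beta(z)/(z-x_n)=\pi\sin\beta\sum_n c_n\,s_0(x_n)^{-1}k(z,x_n)$; its coefficients against the orthogonal system $\{k(\cdot,x_n)\}$ are square-summable precisely because, after inserting $\norm{k(\cdot,x_n)}_\cB^2$, the resulting quantity equals $\pi\sin\beta\sum_n\abs{c_n}^2\abs{s'_\beta(x_n)/s_0(x_n)}$, finite by \eqref{eq:c-numbers}. Hence the series converges in $\cB$, and therefore uniformly on compacts; by \eqref{eq:decomposition} its sum is $s_\beta(z)\sum_n c_n/(z-x_n)=s_\beta(z)/j_\beta(z)=g(z)$, so $g(z)\in\cB$.

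The constant bookkeeping is routine; the step needing care is the interplay of the two notions of convergence. The hard part will be passing from convergence in the norm of $\cB$ to locally uniform convergence (via the reproducing kernel), so that dividing the eigenfunction expansion by the entire function $s_\beta(z)$ is legitimate and yields exactly \eqref{eq:decomposition} with no spurious entire summand. The positivity $s_0(x_n)s'_\beta(x_n)>0$ is what allows the norm identity to reproduce \eqref{eq:c-numbers} verbatim rather than merely up to an equivalence.
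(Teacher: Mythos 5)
Your proof is correct, but it follows a genuinely different route from the paper's. The paper proves the case $\beta=\pi/2$ by exploiting that $s_0(z)$ is a generating vector for $S_{\pi/2}$ (Theorem~\ref{thm:generating-element}): every element of $\cB$ is $(U_{\pi/2}f)(z)$ for some $f\in L_2(\R,m_{\pi/2})$, the identity \eqref{eq:norm-of-kernel} converts this representation into \eqref{eq:decomposition}--\eqref{eq:c-numbers}, and the general case $\beta\in(0,\pi)$ is then obtained by invoking Lemmas~3.3 and 3.4 of \cite{II}. You instead expand $s_\beta(z)/j_\beta(z)$ directly in the orthogonal system $\{k(\cdot,x_n)\}_{x_n\in\spec(S_\beta)}$ --- which is precisely the alternative sketched in the paper's own remark attributed to A.~Baranov --- and, moreover, you carry it out for arbitrary $\beta\in(0,\pi)$ at once by rewriting the reproducing kernel in the pair $(s_\beta,s_{\beta+\pi/2})$. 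Your intermediate identities check out: the kernel is indeed invariant under that reparametrization, $s_{\beta+\pi/2}(x_n)=-s_0(x_n)/\sin\beta$ at zeros of $s_\beta$, and hence $k(z,x_n)=\frac{s_0(x_n)}{\pi\sin\beta}\frac{s_\beta(z)}{z-x_n}$ with $\norm{k(\cdot,x_n)}_\cB^2=\frac{s_0(x_n)s_\beta'(x_n)}{\pi\sin\beta}$, whose positivity legitimately removes the absolute values so that the Parseval sum reproduces \eqref{eq:c-numbers} exactly; the passage between $\cB$-convergence and locally uniform convergence via $\abs{f(w)}\le\sqrt{k(w,w)}\,\norm{f}_\cB$, and the division by $s_\beta(z)$ (bounded below on compacts of $\C\setminus\spec(S_\beta)$), are both sound. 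What your approach buys is self-containment: it avoids the generating-vector machinery and the external transfer lemmas from \cite{II}, at the modest cost of the (easily verified) kernel reparametrization. What the paper's approach buys is that the generating-vector formalism is reused in Theorem~\ref{thm:alternative-characterization}, so the two theorems share a common framework.
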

\begin{proof}
  We begin by proving the necessity of the condition for
  $\beta=\pi/2$. Since $s_0(z)$ is a generating vector for the
  operator $S_{\pi/2}$, for every $f\in L_2(\R,m_{\pi/2})$,
  $f(S)s_0(z)$ is an element of $\cB$, and any vector in $\cB$ can be
  written in this way.  Using the properties of the reproducing kernel
  and
\begin{equation}
  \label{eq:norm-of-kernel}
  \norm{k(\cdot,x_n)}_\cB^2
  	=  \inner{k(\cdot,x_n)}{k(\cdot,x_n)}_\cB
  	= -\frac{1}{\pi}s_{\pi/2}'(x_n)s_0(x_n)\,,
\end{equation}
which is obtained from~(\ref{eq:reproducing-kernel}),
one can rewrite the action of $U_{\pi/2}$ as follows
\begin{equation*}
  (U_{\pi/2} f)(z)=-\pi\sum_{x_n\in\spec(S_{\pi/2})}
\frac{f(x_n)k(z,x_n)}{s_{\pi/2}'(x_n)}\,.
\end{equation*}

Suppose that $s_{\pi/2}(z)/j_{\pi/2}(z)$ is in $\cB$, then there is a
function $f\in L_2(\R,m_{\pi/2})$ such that
\begin{align}
  \frac{s_{\pi/2}(z)}{j_{\pi/2}(z)}&=
   -\pi\sum_{x_n\in\spec(S_{\pi/2})}
\frac{f(x_n)}{s_{\pi/2}'(x_n)}k(z,x_n)\nonumber\\
&= -\sum_{x_n\in\spec(S_{\pi/2})}
\frac{f(x_n)s_{\pi/2}(z)s_0(x_n)}{s_{\pi/2}'(x_n)(z-x_n)}\,,
\label{eq:convergence-in-B}
\end{align}
where we have used (\ref{eq:reproducing-kernel}). Hence, one has
\begin{equation*}
  \frac{1}{j_{\pi/2}(z)}=-\sum_{x_n\in\spec(S_{\pi/2})}
\frac{f(x_n)s_0(x_n)}{s_{\pi/2}'(x_n)(z-x_n)}\,,
\end{equation*}
where the series converges uniformly on compacts of
$\C\setminus\spec(S_{\pi/2})$ since (\ref{eq:convergence-in-B})
converges in $\cB$. By setting
\begin{equation*}
  c_n=-\frac{f(x_n)s_0(x_n)}{s_{\pi/2}'(x_n)}
\end{equation*}
one establishes the necessity of the condition.

Let us now prove that the condition is sufficient for $\beta=\pi/2$. 
For any $n\in\N$, define
\begin{equation*}
  a_n:=\frac{c_ns'_{\pi/2}(x_n)}{s_0(x_n)}
\end{equation*}
and substitute it into (\ref{eq:decomposition}) to obtain
\begin{equation*}
    \frac{1}{j_{\pi/2}(z)}=\sum_{n=1}^\infty
\frac{a_ns_0(x_n)}{s_{\pi/2}'(x_n)(z-x_n)}\,.
\end{equation*}
Therefore, using (\ref{eq:reproducing-kernel}) and
(\ref{eq:norm-of-kernel}), one has
\begin{align}
\frac{s_{\pi/2}(z)}{j_{\pi/2}(z)}
&= \sum_{n=1}^\infty
   \frac{a_ns_{\pi/2}(z)s_0(x_n)}{s_{\pi/2}'(x_n)(z-x_n)}\nonumber
\\
&= \sum_{n=1}^\infty
   \frac{a_n\inner{k(\cdot,x_n)}{s_0(\cdot)}}
        {\norm{k(\cdot,x_n)}^2}k(z,x_n)
   \label{eq:actual-canonical-repr}
\end{align}
for any $z\in\C$.
By definition of the numbers $\{a_n\}_{n\in\N}$ there is a function
$f\in L_2(\R,m_{\pi/2})$ such that $f(x_n)=a_n$ for all $n\in\N$. 
Thus, (\ref{eq:actual-canonical-repr}) means that
$s_{\pi/2}(z)/j_{\pi/2}(z)=(U_{\pi/2}f)(z)\in\cB$.

Once the assertion has been proven for $\beta=\pi/2$, one uses
\cite[Lemmas~3.3 and 3.4]{II} to finish the proof.
\end{proof}
\begin{remark}
A. Baranov pointed out to us that Theorem~\ref{thm:zero-free-function} 
for $\beta=\pi/2$ can be proven by expanding the function 
$s_{\pi/2}(z)/j_{\pi/2}(z)$ with respect to the orthonormal basis 
$k(z,x_n)/\norm{k(\cdot,x_n)}$ 
(with $\{x_n\}_{n\in\N}=\spec(S_{\pi/2})$), thus obviating the use 
of a generating vector.
\end{remark}
\begin{remark}
  If (\ref{eq:decomposition}) and (\ref{eq:c-numbers}) hold, and
  additionally we suppose that
  \begin{equation*}
    \abs{c_n}(1+\abs{x_n})\ge\abs{\frac{s_0(x_n)}{s_{\pi/2}(x_n)}}
    \quad\text{for
    all }n\in\N\,,
  \end{equation*}
then, due to a theorem by Krein \cite[Lecture~16, Theorem~3]{levin}, 
the function $s_\beta(z)/j_\beta(z)$ is in the Cartwright class.
\end{remark}
\begin{remark}
  Clearly, if a zero-free function belongs to $\cB$, then it is a
  generating vector for $S_\beta$ with $\beta\in[0,\pi)$, since it has
  a nonzero projection onto every eigenspace
  (cf. Remark~\ref{rem:s-generating}).
\end{remark}

By using the fact that $s_0(z)/j_0(z)$ is a generating vector for any
selfadjoint extension whenever $s_0(z)/j_0(z)\in\cB$, we prove the
following assertion which gives a different set of necessary and
sufficient conditions for a zero-free function to be in $\cB$.

\begin{theorem}
\label{thm:alternative-characterization}
Assume $s_0(z)\in\cB$ and let $j_\beta(z)$ be defined as in
Theorem~\ref{thm:zero-free-function}. If the function $s_0(z)/j_0(z)$ 
is in $\cB$, then, for all $\beta\in(0,\pi)$,
\begin{equation}
\label{eq:necessity-alt}
\frac{1}{j_0(t)}\in L_2(\R,m_\beta)\quad\text{and}\quad
\frac{s_0(z)}{j_0(z)}=(U_\beta\,\frac{1}{j_0})(z)\,.
\end{equation}
Conversely, if there exists a set $\mathcal{I}\subset(0,\pi)$ having
an accumulation point and such that
\begin{equation*}
\frac{1}{j_0(t)}\in
L_2(\R,m_{\beta})\quad\forall\beta\in\mathcal{I}\quad
\text{and}\quad
(U_\beta\,\frac{1}{j_0})(z)=(U_{\beta'}
\frac{1}{j_0})(z)\quad\forall\beta,\beta'\in\mathcal{I}\,,
\end{equation*}
then $s_0(z)/j_0(z)$ is in $\cB$.
\end{theorem}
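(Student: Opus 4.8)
The plan is to reduce both implications to a single interpolation identity coming from the eigenfunction expansion associated with $S_\beta$. For $\beta\in(0,\pi)$ the operator $S_\beta$ has simple, purely discrete spectrum $\spec(S_\beta)=\{x_n\}$, and the reproducing kernels $k(\cdot,x_n)$ are exactly its eigenfunctions: each $k(\cdot,x_n)$ is a multiple of $s_\beta(\cdot)/(\cdot-x_n)$ and lies in $\Ker(S^*-x_nI)$ by \eqref{eq:k-in-kernel}. Hence $\{k(\cdot,x_n)/\norm{k(\cdot,x_n)}_\cB\}_n$ is an orthogonal basis of $\cB$, so every $g\in\cB$ satisfies $g(z)=\sum_n\norm{k(\cdot,x_n)}_\cB^{-2}\,g(x_n)\,k(z,x_n)$ with convergence in $\cB$. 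Comparing this with \eqref{eq:canonical-isometry} and using the reproducing property $\inner{k(\cdot,x_n)}{s_0(\cdot)}_\cB=s_0(x_n)$, one rewrites the isometry as $(U_\beta f)(z)=\sum_n\norm{k(\cdot,x_n)}_\cB^{-2}\,f(x_n)s_0(x_n)\,k(z,x_n)$. Everything else is term-by-term comparison of these two expansions.

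For necessity, assume $s_0(z)/j_0(z)\in\cB$. Since $j_0$ has simple zeros exactly at $\spec(S_0)$ and distinct canonical extensions have disjoint spectra, $j_0$ never vanishes on $\spec(S_\beta)$ for $\beta\in(0,\pi)$; thus $s_0/j_0$ is genuinely entire (indeed zero-free) with value $s_0(x_n)/j_0(x_n)$ at each $x_n\in\spec(S_\beta)$. Inserting $g=s_0/j_0$ into the basis expansion and matching coefficients with the formula for $U_\beta$ forces the candidate density $f(x_n)=1/j_0(x_n)$, because $s_0(x_n)\neq0$ on $\spec(S_\beta)$. As $m_\beta$ is the atomic measure with mass $\abs{s_0(x_n)}^2/\norm{k(\cdot,x_n)}_\cB^2$ at $x_n$, Parseval gives $\norm{1/j_0}_{L_2(\R,m_\beta)}^2=\sum_n\norm{k(\cdot,x_n)}_\cB^{-2}\abs{s_0(x_n)/j_0(x_n)}^2=\norm{s_0/j_0}_\cB^2<\infty$, so $1/j_0\in L_2(\R,m_\beta)$ and $U_\beta(1/j_0)=s_0/j_0$ for every $\beta\in(0,\pi)$, which is \eqref{eq:necessity-alt}.

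For the converse, set $G:=U_\beta(1/j_0)$, which by hypothesis is a single function in $\cB$, independent of $\beta\in\mathcal{I}$. The same comparison shows $G(x_n)=s_0(x_n)/j_0(x_n)$ for every $x_n\in\spec(S_\beta)$ and every $\beta\in\mathcal{I}$. Both $G$ and the zero-free entire function $s_0/j_0$ are entire, hence $H:=G-s_0/j_0$ is entire and vanishes on $\bigcup_{\beta\in\mathcal{I}}\spec(S_\beta)$. The goal is to upgrade this to $H\equiv0$, which yields $s_0/j_0=G\in\cB$.

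The \emph{main obstacle} is precisely this last step: showing that the zero set $\bigcup_{\beta\in\mathcal{I}}\spec(S_\beta)$ has an accumulation point in $\C$, after which the identity theorem for entire functions forces $H\equiv0$. Here is where the accumulation-point hypothesis on $\mathcal{I}$ is used. By Theorem~\ref{thm:rank-one-perturbations} the extensions form the rank-one family $S_{\pi/2}-\tfrac{\cot\beta}{\pi}\inner{s_0(\cdot)}{\cdot\,}_\cB s_0$ on a common domain, and the coupling $t(\beta)=-\cot\beta/\pi$ is a continuous strictly increasing bijection of $(0,\pi)$ onto $\R$; standard rank-one perturbation theory then makes each eigenvalue of $S_\beta$ depend continuously and strictly monotonically on $\beta$. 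Fixing an accumulation point $\beta_*$ of $\mathcal{I}$ and distinct $\beta_k\in\mathcal{I}$ with $\beta_k\to\beta_*$, I would select one eigenvalue branch $\beta\mapsto x_n(\beta)$ that stays bounded near $\beta_*$ (for interior $\beta_*$ every branch is bounded near $\beta_*$ by continuity, while for $\beta_*\in\{0,\pi\}$ at most one branch escapes to infinity). Then the points $x_n(\beta_k)$ are pairwise distinct by strict monotonicity and converge to the finite limit $\lim_k x_n(\beta_k)$, so $H$ has infinitely many zeros accumulating at a finite point, forcing $H\equiv0$ and completing the proof.
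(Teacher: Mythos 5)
Your proof is correct, and its skeleton matches the paper's: both directions hinge on the interpolation identity $(U_\beta\,\tfrac{1}{j_0})(x_m)=s_0(x_m)/j_0(x_m)$ for $x_m\in\spec(S_\beta)$, and the converse is closed by the identity theorem using the accumulation point of $\mathcal{I}$. The differences are in which standard facts you lean on. For necessity, the paper introduces the auxiliary spectral measure $\widetilde{m}_\beta$ and isometry $\widetilde{U}_\beta$ generated by the vector $s_0/j_0$ (relying on Remark~4, that a zero-free element of $\cB$ is a generating vector) and applies $\widetilde{U}_\beta$ to the constant function $1$; you instead expand $s_0/j_0$ in the orthogonal basis $\{k(\cdot,x_n)\}$ of eigenfunctions of $S_\beta$, read off $f(x_n)=1/j_0(x_n)$ by matching coefficients, and get the $L_2(\R,m_\beta)$ membership from Parseval. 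These are the same computation in different clothing --- the paper itself records the basis-expansion viewpoint in the remark attributed to Baranov --- and your version has the mild advantage of not invoking Remark~4. For the converse, the paper simply cites Kato's analytic perturbation theorem to conclude that $r(z)$ and $s_0(z)/j_0(z)$ agree on a set with an accumulation point; you unpack that citation into an explicit argument, using continuity and strict monotonicity of the eigenvalue branches of the rank-one family from Theorem~\ref{thm:rank-one-perturbations} to produce infinitely many distinct zeros of $G-s_0/j_0$ accumulating at a finite point (your care about a branch escaping to infinity when the accumulation point of $\mathcal{I}$ sits at the boundary of $(0,\pi)$ is a detail the paper glosses over). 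One small simplification available to you: for $\beta\neq\beta'$ the spectra $\spec(S_\beta)$ and $\spec(S_{\beta'})$ are automatically disjoint, so pairwise distinctness of the chosen zeros comes for free and only boundedness of one branch needs an argument. Both routes are sound; yours is more self-contained, the paper's is shorter.
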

\begin{proof}
Assume that $s_0(z)/j_0(z)\in\cB$. Define the spectral functions
\[
\widetilde{m}_\beta(t)
  :=\inner{s_0(\cdot)/j_0(\cdot)}{E_\beta(t)s_0(\cdot)/j_0(\cdot)}_\cB,
\]	
and the isometries $\widetilde{U}_\beta$ from
$L_2(\R,\widetilde{m}_\beta)$ onto $\cB$ such that
$(\widetilde{U}_\beta f)(z):=f(S_\beta)\frac{s_0(z)}{j_0(z)}$. Since 
the function $g(t)\equiv 1$ lies in $L_2(\R,\widetilde{m}_\beta)$ for
all $\beta\in(0,\pi)$ we have the first part of 
\eqref{eq:necessity-alt}. Moreover, taking into account 
\eqref{eq:canonical-isometry}), one has
\begin{equation*}
\frac{s_0(z)}{j_0(z)}=(\widetilde{U}_\beta g)(z)
	= \sum_{x_n\in\spec(S_\beta)}\frac{s_0(x_n)}
		 {j_0(x_n)\norm{k(\cdot,x_n)}_\cB^2}k(z,x_n)
	=(U_\beta\frac{1}{j_0})(z)
\end{equation*}
for every $\beta\in(0,\pi)$.  For the converse part of the assertion,
consider the function $r(z)=(U_\beta\frac{1}{j_0})(z)$ for all 
$\beta\in\mathcal{I}$. It is straightforward to verify that
$r(x_n)=s_0(x_n)/j_0(x_n)$ for $x_n\in\spec(S_\beta)$. Now, since
$\mathcal{I}$ has an accumulation point,
\cite[Chapter 7, Theorem 3.9]{MR0407617} implies that the entire
functions $r(z)$ and $s_0(z)/j_0(z)$ coincide in a set having
accumulation points.
\end{proof}

In \cite[Proposition~3.9]{II} (see also \cite[Theorem~3.2]{woracek2}),
necessary and sufficient conditions for a function to be in $\cB$ are
given in terms of the spectra of two selfadjoint extensions of $S$.
Two of these conditions imply that the products below are convergent
\begin{equation*}
  h_\beta(z):=
\begin{cases}
	\displaystyle{\lim_{r\to\infty}\prod_{|b_k|\le r}
			\left(1-\frac{z}{b_k}\right)}
				& \mbox{ if }0\not\in\spec(S_\beta),
			\\
			\displaystyle{z\lim_{r\to\infty}\prod_{0<|b_k|\le r}
			\left(1-\frac{z}{b_k}\right)}
				& \mbox{ otherwise, }
\end{cases}
\end{equation*}
for any $\beta\in [0,\pi)$. Moreover, the unique real zero-free
function in $\cB$ (up to a multiplicative real constant) is
$s_\beta(z)/h_\beta(z)$. Therefore, one arrives at the following
straightforward conclusion.
\begin{proposition}
  \label{prop:rel-with-woracek}
  Let $s_0(z)$ be an element of $\cB$. If
  $s_\beta(z)/j_\beta(z)\in\cB$, then $j_\beta(z)=h_\beta(z)$. On the
  other hand, if $j_\beta(z)$ is decomposed as in
  (\ref{eq:decomposition}) with the sequence $\{c_n\}_{n\in\N}$
  satisfying (\ref{eq:c-numbers}), then $j_\beta(z)=h_\beta(z)$.
\end{proposition}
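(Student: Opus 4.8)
The plan is to derive both assertions by combining Theorem~\ref{thm:zero-free-function} with the uniqueness of the real zero-free function in $\cB$, namely \cite[Theorem~3.2]{woracek2} (equivalently \cite[Proposition~3.9]{II}), together with the observation recorded just above that, whenever a real zero-free function exists in $\cB$, it is given up to a real multiplicative constant by $s_\beta(z)/h_\beta(z)$. The first assertion will rest on uniqueness alone, while the second will first invoke the sufficiency direction of Theorem~\ref{thm:zero-free-function} to reduce to the first.

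For the first assertion, I would begin by noting that since $j_\beta(z)$ and $s_\beta(z)$ are both real entire functions whose simple zeros are exactly the points of $\spec(S_\beta)$, the quotient $s_\beta(z)/j_\beta(z)$ is a real, entire, zero-free function. Hence assuming $s_\beta(z)/j_\beta(z)\in\cB$ guarantees that $\cB$ contains a real zero-free function, so the discussion preceding the proposition applies and $s_\beta(z)/h_\beta(z)$ is the essentially unique such function. Invoking \cite[Theorem~3.2]{woracek2}, I would conclude that $s_\beta(z)/j_\beta(z)=c\,s_\beta(z)/h_\beta(z)$ for some real $c\neq 0$; cancelling the common factor $s_\beta(z)$ then yields $h_\beta(z)=c\,j_\beta(z)$, that is, $j_\beta(z)=h_\beta(z)$.

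For the second assertion, the hypothesis is precisely the right-hand condition of the criterion in Theorem~\ref{thm:zero-free-function}: if $1/j_\beta(z)$ admits the decomposition \eqref{eq:decomposition} with coefficients satisfying \eqref{eq:c-numbers}, then the sufficiency direction of that theorem gives $s_\beta(z)/j_\beta(z)\in\cB$. Once this membership is in hand, the argument of the first assertion applies verbatim and delivers $j_\beta(z)=h_\beta(z)$.

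The only point requiring care is the normalization implicit in writing $j_\beta(z)=h_\beta(z)$, since the uniqueness theorem produces the identity only up to a real multiplicative constant. I expect this to be the mildest of obstacles, reconciled by observing that fixing the residues $c_k$ in \eqref{eq:decomposition} (respectively, the canonical product defining $h_\beta(z)$) pins down the corresponding function unambiguously, so that matching the common simple zeros together with a single normalizing value forces that constant to equal one.
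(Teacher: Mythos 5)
Your argument is exactly the paper's: the proposition is presented there as a ``straightforward conclusion'' of the immediately preceding discussion, namely the uniqueness (up to a real multiplicative constant) of the real zero-free function $s_\beta(z)/h_\beta(z)$ in $\cB$ from \cite[Theorem~3.2]{woracek2} and \cite[Proposition~3.9]{II}, with the second assertion reduced to the first via the sufficiency direction of Theorem~\ref{thm:zero-free-function}, just as you do. One caveat: since any nonzero real multiple of $h_\beta(z)$ is an admissible choice of $j_\beta(z)$, the equality $j_\beta(z)=h_\beta(z)$ can only hold up to a real constant --- an imprecision already present in the paper's own statement --- so your closing attempt to force that constant to equal one cannot actually succeed, but this does not affect the substantive content of your proof.
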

\begin{remark}
  Assuming that $\cB$ is decomposed as in (\ref{eq:entire-condition})
  (equivalently that there is a zero-free function in $\cB$), the
  unique real zero-free function is nothing but the unique real entire
  gauge (up to a multiplicative real constant).
\end{remark}
In order to clarify the connection between the gauge and the function
$s_0(z)$, let us define the operator $f(S)$ as the operator in $\cB$
given by
\begin{equation*}
  \dom(f(S)):=\{g(z)\in\cB : f(z)g(z)\in\cB\}\,,\quad
  (f(S)g)(z):=f(z)g(z)\,.
\end{equation*}
Clearly this definition is consistent with the notion of a function of
an operator. Moreover, the following assertion immediately follows from
it.
\begin{proposition}
\label{prop:gauge-perturbation}
If there is a zero-free function in $\cB$, then
$s_0(z)/h_0(z)\in\dom(h_0(S))$, $s_0(z)\in\dom((1/h_0)(S))$, and
\begin{equation*}
\left(h_0(S)\frac{s_0}{h_0}\right)(z)=s_0(z)\,,\quad
\left(\frac{1}{h_0}(S)s_0\right)(z)=\frac{s_0(z)}{h_0(z)}\,.
\end{equation*}
\end{proposition}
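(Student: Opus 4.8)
The plan is to read off both assertions directly from the definition of the operator $f(S)$ introduced just above the statement, once the zero-free function has been identified with the gauge. By hypothesis $\cB$ contains a zero-free function; according to the discussion preceding Proposition~\ref{prop:rel-with-woracek}, the unique real zero-free function in $\cB$ (up to a real multiplicative constant) is $s_0(z)/h_0(z)$, so I may take $s_0(z)/h_0(z)\in\cB$ as established. I also use the standing assumption $s_0(z)\in\cB$, together with the fact that $h_0(z)$ is, by construction, a real entire function whose zero-set coincides with that of $s_0(z)$ (equivalently, with $\spec(S_0)$).

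First I would settle the claim about $h_0(S)$. By definition, $s_0(z)/h_0(z)\in\dom(h_0(S))$ means that $s_0(z)/h_0(z)\in\cB$ and that $h_0(z)\cdot\bigl(s_0(z)/h_0(z)\bigr)\in\cB$. The former holds because $s_0/h_0$ is precisely the zero-free function, and the latter is nothing but the statement $s_0(z)\in\cB$. Hence $s_0(z)/h_0(z)\in\dom(h_0(S))$ and $\bigl(h_0(S)(s_0/h_0)\bigr)(z)=s_0(z)$, which is the first identity.

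Next I would treat the operator $(1/h_0)(S)$, and this is where the only genuine point arises. The symbol $1/h_0$ is not entire: it has poles precisely at the zeros of $h_0$. Nevertheless, since the zero-set of $h_0$ agrees with that of $s_0$, the product $(1/h_0)(z)\,s_0(z)=s_0(z)/h_0(z)$ has removable singularities at all these points and is therefore entire --- in fact it is exactly the zero-free function already known to lie in $\cB$. Thus the two requirements $s_0(z)\in\cB$ and $(1/h_0)(z)\,s_0(z)\in\cB$ in the definition of $\dom((1/h_0)(S))$ are both met, so $s_0(z)\in\dom((1/h_0)(S))$ and $\bigl((1/h_0)(S)s_0\bigr)(z)=s_0(z)/h_0(z)$, which is the second identity.

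The argument is essentially a verification of definitions, so there is no substantial obstacle to overcome. The one place requiring care is the pole-cancellation just mentioned: the operator $(1/h_0)(S)$ may legitimately be applied to $s_0$ only because the poles of $1/h_0$ are exactly matched by the zeros of $s_0$, so that the product remains entire and, being the gauge, remains in $\cB$. Everything else follows immediately from the hypothesis that a (hence the real) zero-free function belongs to $\cB$.
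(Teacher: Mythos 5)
Your proposal is correct and matches the paper's intent exactly: the paper gives no written proof, stating only that the proposition ``immediately follows'' from the definition of $f(S)$, and your verification (including the correct observation that the poles of $1/h_0$ are cancelled by the zeros of $s_0$, so that the product is the entire zero-free function already known to lie in $\cB$) is precisely the spelled-out version of that one-line justification.
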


\begin{acknowledgments}
Part of this work was done while the second author (JHT)
visited IIMAS--UNAM in the winter of 2013. He sincerely thanks them
for their kind hospitality. The authors thank A. Baranov for his
comment on Theorem~\ref{thm:zero-free-function}.
\end{acknowledgments}

\end{document}